\newcommand{\subparagraph}{}
\titlespacing*{\section}{15pt}{1.2\baselineskip}{0.9\baselineskip}
\newcommand{\myhash}{%
	{\settoheight{\dimen0}{C}\kern-.05em\, \resizebox{!}{\dimen0}{\raisebox{\depth}{\#}}}}
\newcommand{\Sigmay}{{\Sigmay}_{\yv}}
\def\lambdam{{\boldsymbol{\lambda}}}
\def\phim{\boldsymbol{\phi}}
\def\mindex#1{\index{#1}}
\def\sq{\hbox{\rlap{$\sqcap$}$\sqcup$}}
\def\qed{\ifmmode\sq\else{\unskip\nobreak\hfil
\penalty50\hskip1em\null\nobreak\hfil\sq
\parfillskip=0pt\finalhyphendemerits=0\endgraf}\fi\medskip}
\long\def\defbox#1{\framebox[.9\hsize][c]{\parbox{.85\hsize}{%
\parindent=0pt
\baselineskip=12pt plus .1pt      
\parskip=6pt plus 1.5pt minus 1pt 
 #1}}}
\long\def\beginbox#1\endbox{\subsection*{}%
\hbox{\hspace{.05\hsize}\defbox{\medskip#1\bigskip}}%
\subsection*{}}
\def\endbox{}
\newsavebox{\junk}
\savebox{\junk}[1.6mm]{\hbox{$|\!|\!|$}}
\def\argmin{\mathop{\rm arg\, min}}
\newcommand{\field}[1]{\mathbb{#1}}
\def\Re{\field{R}}
\def\ind{\field{I}}
\def\bC{{\mathbb C}}
\def\bR{{\mathbb R}}
\def\bfA{{\bf A}}
\def\bfD{{\bf D}}
\def\bfI{{\bf I}}
\def\bfW{{\bf W}}
\def\bfa{{\bf a}}
\def\bfb{{\bf b}}
\def\bfe{{\bf e}}
\def\bfh{{\bf h}}
\def\bft{{\bf t}}
\def\bfy{{\bf y}}
\def\bfz{{\bf z}}
\def\scrC{{\mathscr{C}}}
\def\sfF{{\sf F}}
\def\sfH{{\sf H}}
\def\bfmath#1{{\mathchoice{\mbox{\boldmath$#1$}}%
{\mbox{\boldmath$#1$}}%
{\mbox{\boldmath$\scriptstyle#1$}}%
{\mbox{\boldmath$\scriptscriptstyle#1$}}}}
\def\bfmY{\bfmath{Y}}
\def\bfmhhaY{\bfmath{\hhaY}} 
\def\bfmhhaY{\hbox to 0pt{$\widehat{\bfmY}$\hss}\widehat{\phantom{\raise 1.25pt\hbox{$\bfmY$}}}}
\def\til={{\widetilde =}}
\def\clC{{\cal C}}
\def\clD{{\cal D}}
\def\clN{{\cal N}}
\def\clP{{\cal P}}
\def\clS{{\cal S}}
 \def\FRAC#1#2#3{\genfrac{}{}{}{#1}{#2}{#3}}
\def\ddtp{{\mathchoice{\FRAC{1}{d^{\hbox to 2pt{\rm\tiny +\hss}}}{dt}}%
{\FRAC{1}{d^{\hbox to 2pt{\rm\tiny +\hss}}}{dt}}%
{\FRAC{3}{d^{\hbox to 2pt{\rm\tiny +\hss}}}{dt}}%
{\FRAC{3}{d^{\hbox to 2pt{\rm\tiny +\hss}}}{dt}}}}
\def\average#1,#2,{{1\over #2} \sum_{#1}^{#2}}
\def\eye(#1){{\bf(#1)}\quad}
\newtheorem{proposition}{{\bf Proposition}}
\def\eq#1/{(\ref{e:#1})}
\newcommand{\beqn}[1]{\notes{#1}%
\begin{eqnarray} \elabel{#1}}
\newcommand{\eeqn}{\end{eqnarray} }
\newcommand{\beq}[1]{\notes{#1}%
\begin{equation}\elabel{#1}}
\newcommand{\eeq}{\end{equation}}
\def\bdes{\begin{description}}
\def\edes{\end{description}}
\newcounter{rmnum}
\newcounter{anum}
\def\ass(#1:#2){(#1\ref{#1:#2})}
\def\ritem#1{
\item[{\sf \ass(\current_model:#1)}]
}
\newenvironment{recall-ass}[1]{%
\begin{description}
\def\current_model{#1}}{
\end{description}
}
\pgfplotsset{compat=newest}
\def\herm{{\sfH}}
\def\sigmam{{\boldsymbol{\sigma}}}
\newcommand{\normd}[1]{{\left\vert\kern-0.25ex\left\vert\kern-0.25ex\left\vert #1 
		\right\vert\kern-0.25ex\right\vert\kern-0.25ex\right\vert}}
\long\def\comment#1{}
\newcommand{\yv}{{\bf y}}
\newcommand{\Sigmam}{\hbox{\boldmath$\Sigma$}}
\newcommand{\Phim}{\hbox{\boldmath$\Phi$}}
\newcommand{\Psim}{\hbox{\boldmath$\Psi$}}
\renewcommand{\Re}{{\rm Re}}
\newcommand{\transp}{{\sf T}}
\def\gammam{\boldsymbol{\gamma}}
\title{Machine Learning for Geometrically-Consistent Angular Spread Function Estimation in Massive MIMO}
\author{Yi Song,  Mahdi Barzegar Khalilsarai, Saeid Haghighatshoar,   and Giuseppe Caire  
\thanks{The authors are with the Communications and Information Theory Group (CommIT), Technische Universit\"{a}t Berlin (\{yi.song, m.barzegarkhalilsarai, saeid.haghighatshoar,  caire\}@tu-berlin.de).}
}
\begin{document}

\maketitle

\newpage

\def\ful{f_\text{ul}}
\def\fdl{f_\text{dl}}
\def\asfc{\scrC}
\def\asful{\scrC_\text{ul}}
\def\asfdl{\scrC_\text{dl}}

%
%

\begin{abstract}
	In the spatial channel models used in multi-antenna wireless communications, the propagation from a single-antenna transmitter (e.g., a user) to an $M$-antenna receiver (e.g., a Base Station) 
	occurs through scattering clusters located in the far field of the receiving antenna array.  
	The Angular Spread Function (ASF) of the corresponding $M$-dim channel vector  describes 
	the angular density of the received signal power at the array. 
	The modern literature on massive MIMO
	has recognized that the knowledge of  covariance matrix of  user channel vectors is very useful 
	for various applications such as hybrid digital analog beamforming, pilot decontamination,  etc.
	Therefore, most literature has focused on the estimation of such channel covariance matrices.  
	However, in some applications such as uplink-downlink covariance transformation (for FDD massive MIMO precoding) and channel sounding some form of ASF estimation is required either implicitly or explicitly. 
	 It turns out that while covariance estimation is well-known and well-conditioned, 	the ASF estimation is a much harder problem and  is in general ill-posed. 
	In this paper, we show that under additional \textit{geometrically-consistent group-sparsity} structure on the ASF, which is prevalent  in almost all wireless propagation scenarios, one is able to estimate ASF properly. 
	We  propose  sparse dictionary-based algorithms that promote this group-sparsity structure via suitable regularizations. Since generally it is difficult to capture the notion of group-sparsity through proper regularization, we propose another algorithm based on \textit{Deep Neural Networks} (DNNs) that learns this structure. 
	We provide numerical simulations to assess the performance of our proposed algorithms. We also compare the results with that of other methods  in the literature, where we re-frame those methods in the context of ASF estimation in massive MIMO. 
	
\end{abstract}

\begin{keywords}
Massive MIMO, Sparse Scattering, Angular Spread Function (ASF), Group-Sparsity, Sparse Dictionary-based method, Deep Neural Networks (DNNs).
\end{keywords}

\section{Introduction}
Consider a massive MIMO system \cite{Marzetta-TWC10} with a BS with $M\gg 1$ antennas serving several single-antenna users. We assume that BS antennas lie on a \textit{Uniform Linear Array} (ULA) with standard antenna spacing $d=\frac{\lambda}{2}$ where $\lambda=\frac{c_0}{f_0}$ denotes the wavelength with $c_0$ and $f_0$ being the speed of light and the carrier frequency, respectively.
We consider a generic user and assume that the propagation between this user and the Base Station (BS) array occurs through a collection of sparse scatters (such as buildings, trees, etc.) in the \textit{Angle-of-Arrival} (AoA) domain as illustrated in Fig.\,\ref{fig:scat_chan}.
We consider a  block-fading model, widely-adopted as a wireless channel model \cite{tse2005fundamentals}, where the channel vector of a user at a specific resource block $s$ is given by 
\begin{align}
\bfh(s)=\sum_{i=1}^k w_i(s) \bfa(\xi_i),
\end{align}
where $\{w_i(s): i\in [k]\}$ and $\{\xi_i: i \in [k]\}$ denote the random channel coefficients and the AoAs of the $k$ scatterers in the channel, respectively, and where $\bfa(\xi)$ denotes the array response vector at the  AoA parametrized by $\xi=\sin(\theta)$ (in terms of the AoA $\theta$), which for the standard array spacing is given by $\bfa(\xi)=(1, e^{j \pi \xi}, e^{j 2 \pi \xi}, \dots, e^{j \pi (M-1)\xi})^\transp$, where $M$ denotes the number of BS antenna as before. 
\begin{figure}[t]
	\centering
	\includegraphics[scale=0.35]{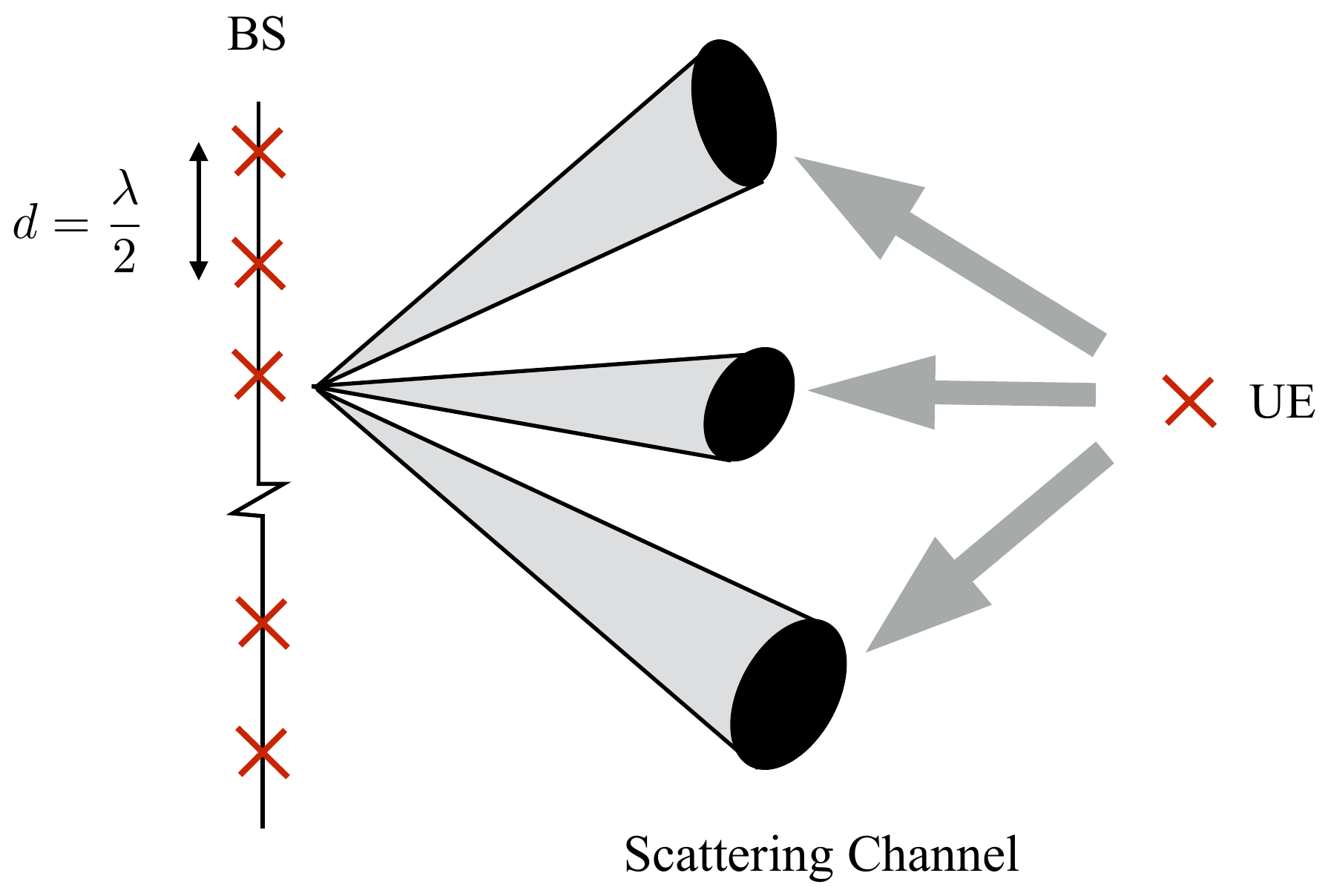}
	\caption{Sparse scattering channel between a generic user and the BS. In this example, the channel consists of 3 large scatterers reflecting the power of the user to the BS array.}
	\label{fig:scat_chan}
\end{figure}
We assume that the channel coefficients have a Gaussian distribution $w_i(s)\sim \clC\clN (0, \gamma_i)$ where $\gamma_i$ denotes the channel strength of the $i$-th scatterer. Due to the Gaussian assumption, the  statistics of the channel  can be fully specified by the covariance matrix of the channel vector $\bfh(s)$ given by
\begin{align}\label{disc_model}
\Sigmam=\sum_{i=1}^k \gamma_i \bfa(\xi_i) \bfa(\xi_i)^\herm.
\end{align}
The channel model \eqref{disc_model} is more suitable for an ideal scenario when the scatterers are specular with very narrow AoA width. In practice, however, the scatterers are physical object with finite but non-zero width, thus, it is more realistic to consider  a diffuse scattering  given by the continuous limit of \eqref{disc_model} with
\begin{align}\label{eq:asf_cov}
\Sigmam=\int _{-1}^{1} \gamma(\xi) \bfa(\xi) \bfa(\xi)^\herm d \xi
\end{align}
where $\gamma(\xi)$ is a positive function representing the density of the power of scatterer seen at a narrow AoA range $[\xi, \xi+d \xi]$, which we call the \textit{Angular Spread Function} (ASF) of the channel vector $\bfh(s)$. The specular case in \eqref{disc_model} can be obtained from this general model by approximating each specular component $(\gamma_i, \xi_i)$ with a narrow rectangular pulse centered at $\xi=\xi_i$ and with an amplitude proportional to $\gamma_i$.

The modern literature on massive MIMO 
has recognized that the knowledge of  covariance matrix of  user channel vectors as in \eqref{eq:asf_cov} is very useful for various applications in massive MIMO such as user grouping and scheduling \cite{xie2016unified}, hybrid digital analog beamforming  \cite{adhikary2013joint, haghighatshoar2016massive, haghighatshoar2018low}, pilot decontamination \cite{yin2013decontaminating, haghighatshoar2017massive, yin2015pilot, bjornson2016pilot}, and low-complexity beamforming \cite{benzin2019low}.
Therefore, most literature has focused on the estimation of such channel covariance matrices.  
However, in some applications such as uplink-downlink covariance transformation for FDD massive MIMO precoding \cite{khalilsarai2018fdd, miretti2018fdd}  some form of ASF estimation is required either implicitly or explicitly. Also, ASF estimation allows channel sounding with an antenna array rather than with mechanically steered horn antennas as often done in nowadays practical channel measurement, thus, it yields faster and more precise channel measurements. 
It turns out that while covariance estimation is well-known and well-conditioned, 	the ASF estimation is a much harder problem and  is in general ill-posed. 
In this paper, our goal is to estimate the ASF $\gamma$ from a collection of  i.i.d. channel vectors $\{\bfh(s): s \in [T]\}$, where $T$ denotes the number of the samples. These i.i.d. data can be collected by sampling the channel vectors of the user at resource  blocks sufficiently separated in time, frequency, or both (separation measured in terms of the coherence time and coherence bandwidth of the channel).  We denote the sample covariance of the samples $\{\bfh(s): s\in [T]\}$ by
 \begin{align}
\widehat{\Sigmam}_\bfh=\frac{1}{T} \sum_{s\in [T]} \bfh(s) \bfh(s)^\herm.
\end{align}
Note that, for Gaussian channel vectors, $\widehat{\Sigmam}$ is a sufficient statistics for $\Sigmam$ and is the  only information available for ASF estimation. Unfortunately,  even in the ideal case where $T \to \infty$ and $\widehat{\Sigmam}$ tends to true covariance matrix $\Sigmam$ in \eqref{eq:asf_cov}, the estimation of ASF is ill-posed. The main reason is that \eqref{eq:asf_cov} is a mapping between the infinite-dim set of ASF to the finite-dim set of covariance matrices, thus, there maybe  many ASFs corresponding to the same covariance matrix $\Sigmam$. We can see this easily in the specific case of ULA, where $\Sigmam$ turns out to be a \textit{positive semi-definite} (PSD) Toeplitz matrix whose first column is given by $\sigmam=(\sigma_0, \dots, \sigma_{M-1})^\transp \in \bC^M$ with
\begin{align}\label{eq:fourier}
\sigma_r=\int_{-1}^1 \gamma(\xi) e^{j \pi r \xi} d\xi.
\end{align}
One can see from \eqref{eq:fourier} that $\sigmam$ contains only the first $M$ Fourier coefficients of $\gamma(\xi)$. Of course, in the asymptotic scenario where the number of BS antennas $M \to \infty$, one would be able to recover the ASF $\gamma$ from all its Fourier coefficients in \eqref{eq:fourier}, thus,  from the covariance matrix $\Sigmam$. However, for any finite $M$, as is the case in all massive MIMO implementation, the recovery of the  ASF from the covariance matrix is  ill-posed unless one imposes additional structure on the set of  ASFs $\gamma(\xi)$. 

\subsection{Contribution}
In this paper, we propose a novel method for ASF estimation in massive MIMO. To make the recovery feasible, we impose  additional structure  on the set of ASFs by assuming that the ASFs are \textit{group-sparse} in  the angular domain. More precisely, as illustrated in Fig.\,\ref{fig:scat_chan}, the group-sparsity implies that a generic ASF $\gamma(\xi)$ can be decomposed as 
\begin{align}\label{asf_group}
\gamma(\xi)=\sum_{k=1}^K \gamma_k(\xi) \ind\{\xi \in \clS_k\}
\end{align}
where $\ind\{.\} \in \{0,1\}$ denotes the indicator function, where $\{\clS_k: k \in [K]\}$ is a group of $K$ mutually disjoint support sets  in the angular domain, and where $\gamma_k$ is the fraction of $\gamma$ supported on $\clS_k$ corresponding to $K$ scatterers. Moreover, we assume that each support set $\clS_k$ is a connected set, and the whole support $\clS=\cup_{k=1}^K \clS_k$ is a much smaller subset of the  set of all feasible AoAs. 
Note that, by assuming that each $\clS_k$ is a connected set, we aim to promote the group-sparsity of the ASF $\gamma$, which  is motivated by the fact that $\clS_k$ corresponds to the reflection area of a scatterer as seen from the BS antenna perspective, thus, it is a connected set (see, e.g., Fig.\,\ref{fig:scat_chan}).
For a ULA, studied in this paper, with the set of AoAs $[-1,1]$, the set of supports $\clS_k$, $k \in [K]$, correspond to a collection of $K$ non-overlapping intervals (connected sets in $[-1,1]$) of the form $[\xi_k^\text{i}, \xi_k^\text{e}]$. Fig.\,\ref{fig:scat_chan} illustrates an example of group-sparse scattering channel with $K=3$ scatterers. 

%

In this paper, we first propose a \textit{Non-Negative Least Squares} (NNLS) algorithm for recovering the ASF from noisy samples. We show that although this algorithm promotes the sparsity of the ASF in the angle domain it is unable to promote the group-sparsity. Then, we modify NNLS by introducing a new type of regularization, which we prove to promote the group-sparsity of the estimated ASF. As an alternative approach, we use  \textit{Deep Neural Networks} (DNNs) \cite{goodfellow2016deep} and train them using group-sparse ASFs. Our results show that interestingly DNNs are powerful enough to extract the ``group-sparsity'' structure from the training data, and alleviate the need for any additional regularization. Moreover, DNNs are also quite fast in computation since they do not require running time-consuming iterative optimization methods needed for regularization-based methods. We perform numerical simulations to compare the performance of our proposed methods with that of other competitive methods in the literature after re-framing them in the context of ASF estimation in massive MIMO.

\subsection{Related Work}
Interestingly, ASF estimation for  the special case of ULA boils down to the well-known classical spectral estimation problem \cite{kay1999modern,stoica2005spectral} where the goal is to estimate the power spectral density of a scalar stationary process from its $M$ time samples $\{y_t: t\in [M]\}$ with the following two differences:
\begin{itemize}
	\item rather than time samples, one has access to the samples along the antennas given by the channel vector $\bfh= \{h_i: i \in [M]\}$ where $h_i$ denotes the sample an the antenna $i$.
	
	\item one has access to multiple (with our notation $T$) i.i.d. realization of the process, namely, $\bfh(s)=\{h_i(s): i \in [M]\}$ for $s\in [T]$, rather than the single realization $\{y_t: t \in [M]\}$ typically encountered in the spectral estimation in the  classical scenario; this facilitates the ASF estimation in ULA further.
\end{itemize}
Therefore, at least in theory, one can apply all the classical methods such as periodogram for the ASF estimation in ULA; we refer to \cite{kay1999modern} for the vast literature on classical spectral analysis and to \cite{stoica2005spectral} for more modern techniques. However, as we illustrate using numerical simulations, these methods do not suit for capturing  sparsity, and in particular group-sparsity of the ASF, we address in this paper.  
A recent work that  studies indirectly the ASF estimation for massive MIMO is \cite{miretti2018fdd}, which applies $\ell_2$-norm minimization  followed by iterative projection in Hilbert spaces to solve the following optimization problem 
\begin{align}\label{HHI_method}
\widehat{\gamma}(\xi)= \argmin _{\mu\geq 0} \|\mu\|_2 \text{ s.t. } \int_{-1}^1 \mu(\xi) e^{j \pi r \xi} d \xi=\sigma_r,
\end{align}
where $\|\mu\|_2=\sqrt{\int _{-1}^{1} \mu(\xi)^2 d\xi}$ denotes the $\ell_2$-norm of $\mu$, and where $\sigma_r$ are Fourier samples of the original ASF as in \eqref{eq:fourier}.
As we illustrate in the simulations, also well-known in the literature, $\ell_2$-norm minimization does not promote the sparsity of the ASF, thus, it produces anti-sparse rather than sparse ASFs.

Recovery of signals under group-sparsity addressed here is also widely studied in the recent Compressed Sensing (CS) literature (see, e.g., \cite{baraniuk2010model, eldar2009block} and refs. therein). However, all the proposed methods  exploit  group-sparsity across multiple signal samples, known as Multiple Measurement Vector (MMV) problem, or for a single signal sample but assuming that the support of each group in the signal is explicitly known \cite{baraniuk2010model, eldar2009block}.  In our case, we estimate only a single ASF, so the MMV setting is not applicable.  Moreover, the support and size of each group is given by the AoA range and width of the corresponding  scatter, which is a priori unknown and also changes from one scatterer to the other. As a result, the proposed CS methods are ineffective for capturing the group-sparsity we study here. To the best of our knowledge, the problem of signal recovery under the most general group-sparsity structure we study here has not been previously studied in the literature. Fortunately, we are able to tackle this problem by introducing a novel regularization technique. The key to the success of our method lies in the positivity of the ASF.

As explained before, another method we use for ASF estimation is based on DNNs \cite{goodfellow2016deep}. 
DNNs have recently created a revolution in Machine Learning (ML) community and have provided a new paradigm for how ML can be adopted in a wide variety of real-world problems. Related to the wireless applications we are interested in this paper, DNNs have been applied in  many wireless communication applications such as signal detection \cite{ye2018power}, channel encoding \cite{cammerer2018end, farsad2018deep}, decoding \cite{gruber2017deep, nachmani2018deep}, signal estimation \cite{he2018deep}, and resource allocation \cite{sun2018learning}. In this paper, we use DNNs as an alternative to the group-sparsity promoting regularization we already discussed. More specifically, we train DNN such that it is able to learn the notion of \textit{group-sparsity} from training samples, where afterwards we use it as a black-box algorithm that is able to estimate the group-sparse ASF from the observation of noisy channel vectors samples.  This alleviates  need for any group-sparsity regularization. Moreover, it has the fundamental advantage that one does not need to run any iterative algorithm, which typically requires many iteration to converge; instead one immediately computes the ASF estimate by straightforward calculations through the network, which can be done even in parallel to obtain a tremendous speed-up.


\newcommand{\ch}[1]{\check{#1}}
\newcommand{\wch}[1]{\widetilde{#1}}

\section{Proposed ASF Estimation Algorithms}
\subsection{Basic Setup}
In this section, we introduced our proposed algorithms for ASF estimation. We assume that we have a collection of $T$ i.i.d. noisy channel vectors $\{\bfy(s): s\in [T]\}$, where $\bfy(s)=\bfh(s)+\bfz(s)$ where $\bfz(s)\sim \clC\clN(0, N_0 \bfI)$ is  the measurement noise and where $\bfh(s)$ is the channel vector produced  by a group-sparse ASF $\gamma(\xi)$. We denote the sample covariance of the noisy channel vectors by $\widehat{\Sigmam}_\bfy=\frac{1}{T} \sum_{s\in [T]} \bfy(s) \bfy(s)^\herm$.

For the algorithm proposed in this section, we first approximate the ASF $\gamma(\xi)$ with a dictionary of rectangular pulses  
\begin{align}\label{dic_est}
\wch{\gamma}(\xi)=\sum_{g=1}^G \gamma_g \psi_g(\xi)
\end{align}
where $\psi_g(\xi)=R(\xi-\xi_g)$  with $R(\xi)$ being a rectangular pulse of width $\frac{2}{G}$ centered at $0$, and where $\xi_g=-1+\frac{2(g-1)}{G}$ belongs to the uniform grid of size $G$ over the set of AoAs $[-1,1]$. Fig.\,\ref{fig:rect_dict} illustrates this dictionary. 

\begin{figure}[t]
	\centering
	\includegraphics[scale=0.7]{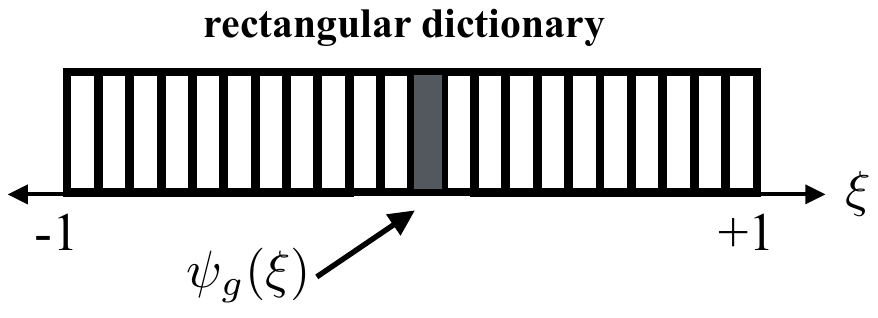}
	\caption{A dictionary for ASF approximation consisting of adjacent rectangular pulses. }
	\label{fig:rect_dict}
\end{figure}

The covariance matrix corresponding to the approximation \eqref{dic_est} is given by
\begin{align}\label{dic_mat_est}
\wch{\Sigmam}(\gammam)&=\int _{-1} ^ 1 \sum_{g=1}^G \gamma_g \psi_g(\xi) \bfa(\xi) \bfa(\xi^\herm)\nonumber\\
&=\sum_{g=1}^G \gamma_g \int_{-1}^1 \psi_g(\xi) \bfa(\xi) \bfa(\xi)^\herm d\xi\nonumber\\
&=\sum_{g=1}^G \gamma_g \Psim_g,
\end{align}
where $\Psim_g=\int_{-1}^1 \psi_g(\xi) \bfa(\xi) \bfa(\xi)^\herm d\xi$ is the covariance matrix corresponding to the rectangular pulse $\psi_g(\xi)$. Note that for the ULA, all the matrices $\Psim_g$ are \textit{positive semi-definite} (PSD)  Toeplitz matrices.

\def\psim{\boldsymbol{\psi}}
\subsection{Non-Negative Least Squares}
To estimate the coefficients $\gammam=(\gamma_1, \dots, \gamma_G)^\transp$ in \eqref{dic_est}, we use the following optimization problem
\begin{align}\label{eq:nnls_1}
\gammam^\star=\argmin _{\gammam \geq 0} \left \|\widehat{\Sigmam} - \wch{\Sigmam}(\gammam)-N_0 \bfI\right \|_\sfF^2,
\end{align}
where $\wch{\Sigmam}(\gammam)=\sum_{g=1}^G \gamma_g \Psim_g$ is given by \eqref{dic_mat_est}, where $N_0$ denotes the noise power, and where we assumed that all the coefficients of $\gammam$ are positive.Since all the matrices $\Psim_g$ are Hermitian Toeplitz, denoting by $\psim_g$ the first column of $\Psim_g$, we can write \eqref{eq:nnls_1}  equivalently as 
\begin{align}\label{eq:nnls_2}
\gammam^\star=\argmin _{\gammam \geq 0} \left \|\bfW \widehat{\sigmam} - \sum_{g=1}^G \gamma_g \bfW  {\psim}_g  - N_0[\bfW]_{1,1} \bfe_1\right \|^2,
\end{align}
where $\bfe_1=(1,0,\dots,0)$ denotes the first canonical vector, where $\widehat{\sigmam}$ is the first column of the matrix obtained by the Toeplizification (averaging over the diagonals) of $\widehat{\Sigmam}_\bfy$ as
\begin{align}\label{Toepification}
[\widehat{\sigmam}]_k=\frac{\sum_{l=1}^{M-k+1} [\widehat{\Sigmam}_\bfy]_{l,l+k-1}}{M-k+1},
\end{align}
and where $\bfW$ is an $M \times M$ diagonal matrix with diagonal elements $(\sqrt{M}, \sqrt{2(M-1)}, \sqrt{2(M-2)},\dots, \sqrt{2})^\transp$ and takes into account the number of repetition of the elements in an $M \times M$ Hermitian Toeplitz matrix. Finally, by defining $\bfb=\bfW \widehat{\sigmam}-N_0[\bfW]_{1,1}\bfe_1$ and $\bfA=[\bfW \psim_1, \dots, \bfW \psim_G]$ we can write \eqref{eq:nnls_2} more compactly as the following \textit{Non-Negative Least-Squares} (NNLS) problem 
\begin{align}\label{eq:nnls_3}
\gammam^\star=\argmin_{\gammam \geq 0} \|\bfA \gammam - \bfb\|^2,
\end{align}
which can be efficiently solved with off-the-shelf optimization toolboxes (such as ``lsqnonneg.mat'' in MATLAB).
NNLS in \eqref{eq:nnls_3} has several interesting features 
\cite{slawski2013non, RN276} that it promotes the sparsity of the coefficients $\gammam$ without any need for additional 
sparsity-promoting regularizations such as $\ell_1$-norm traditionally used in CS algorithms such as the classical LASSO  \cite{tibshirani1996regression}.
Also, the past literature starting with Donoho {\em et al.}  \cite{donoho1992maximum} and more recent results  illustrate that non-negativity constraint alone  suffices to recover a sparse non-negative signal from under-determined linear 
measurements both in the noiseless  \cite{bruckstein2008uniqueness, donoho2010counting, wang2009conditions, wang2011unique}
and in the noisy  \cite{slawski2013non, RN276} case. 

It was shown in \cite{RN276} that a necessary condition  on the coefficient matrix $\bfA$ for NNLS to recover the sparse vector $\gammam$ efficiently  is that there exists a vector $\bft \in \bR^M$ such that $\Re[\bfA^\transp \bft] >0$. Interestingly, in  our case, this condition is immediately satisfied since the first row of the matrix $\bfA$ is given by 
\begin{align}
[\bfA]_{1,g}&=[\bfW]_{1,1} [\psim_g]_1=\sqrt{M} \int _{-1}^1 \psi_g(\xi) d \xi\\
& = \sqrt{M} \int _{-1}^1 R(\xi-\xi_g) d \xi\\
&= \sqrt{M} \int _{-1}^1 R(\xi) d \xi  >0,
\end{align}
where $R(\xi)$ is the rectangular pules of width $\frac{2}{G}$ and centered at zero, introduced before.
Hence, the necessary condition for NNLS is immediately satisfied by setting $\bft=(1,0,\dots, 0)^\transp$, which yields 
\begin{align}
\Re[\bfA^\transp \bft]= \sqrt{M} \int_{-1}^1 R(\xi) d \xi \times (1,1,\dots, 1)^\transp> 0. 
\end{align}

\begin{figure*}[t]
	\centering
	\includegraphics[scale=0.55]{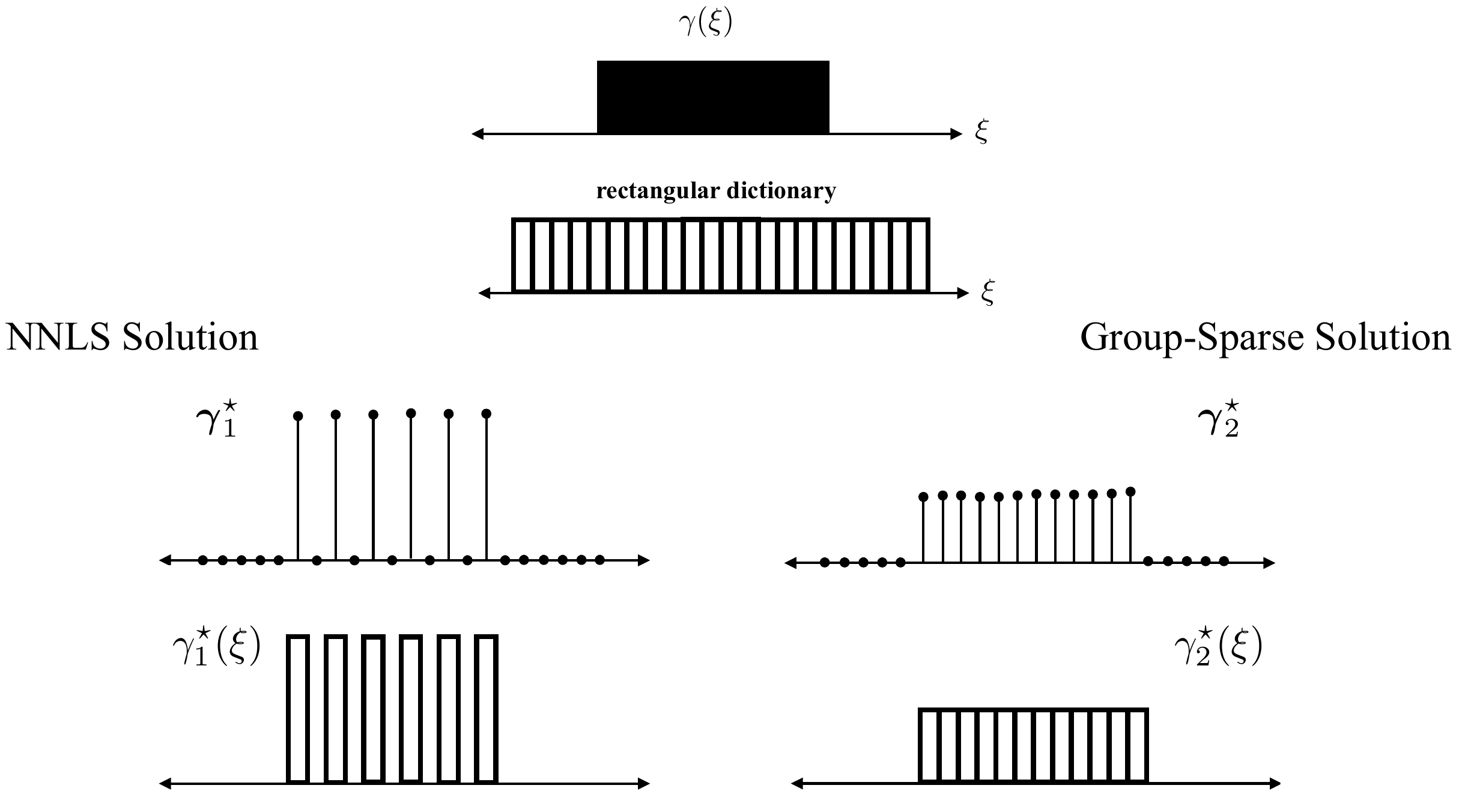}
	\caption{Sparsity promoting nature of NNLS and the lack of group-sparsity.}
	\label{fig:nnls_grouping}
\end{figure*}

\subsection{Geometrically Consistent ASF Estimation Using Generalized NNLS}\label{geom-consistent}
As we illustrate in our simulation results, NNLS indeed yields a very sparse solution for $\gammam$, thus, naturally favors ASFs that are sparse in the angular domain. However, NNLS does not necessarily yield group-sparse solutions. This problem is more evident when one increases the dictionary size $G$ much beyond the number of antennas $M$ to get a better approximation of the ASF since in that case the angular width of rectangular pulses $\psi_g(\xi)$, given by  $\frac{2}{G}$, goes much below the spatial resolution $O(\frac{1}{M})$  of the ULA. As a result, NNLS returns solutions that are highly spiky over the support of each scatterer, and are not necessarily group-sparse.

Let us explain this problem more in detail with  the simple example of rectangular ASF illustrated  in Fig.\,\ref{fig:rect_dict}.
This figure also illustrates two possible  ASF estimates. In the first,  $\gamma(\xi)$ is approximated by a sparse set of coefficients ${\gammam}^\star_1$, which yields the estimated ASF ${\gamma}^\star_1(\xi)$. It is important to note that although ${\gamma}^\star_1(\xi)$ is far from the original rectangular ASF $\gamma(\xi)$, it  yields almost the same covariance matrix as the original ASF $\gamma(\xi)$ when the rectangular pulses are sufficiently narrow (narrower than $\frac{1}{M}$ given by the angular resolution of the ULA). This is because
the mapping from ASFs into covariance matrices in \eqref{eq:asf_cov} is not one-to-one, and several ASFs apparently quite different from each other may yield very similar covariance matrices.  In the second case, we consider another approximation of the ASF where the set of coefficients ${\gammam}^\star_2$  is group-sparse but not as sparse as ${\gammam}^\star_1$  in the first case. 
Intuitively speaking (and as checked via numerical simulations), the proposed NNLS algorithm is more likely to produce a vector of coefficients with sparsity pattern as in ${\gammam}^\star_1$ than as in ${\gammam}^\star_2$, thus, it does not yield an ASF with a group-sparsity structure as in the original ASF $\gamma(\xi)$ (or ${\gamma}^\star_2(\xi)$).

Our goal in this section is to modify the NNLS algorithm in order to enforce the desired group-sparsity structure. 
We first define 
\begin{align}
\clP_p=\left \{\frac{\bfe_i+\dots+\bfe_{i+p-1}}{\sqrt{p}}: i\in [G-p]\right \},
\end{align}
where $\bfe_i$ denotes the $i$-th canonical vector in $\bR^G$ with $1$ as its $i$-th component and $0$ elsewhere. Note that for $p=1$, $\clP_p$ consists of all canonical vectors in $\bR^G$. In general, for any $p$, $\clP_p$ consists of all \textit{discrete rectangular pulses} of width $p$ and $\ell_2$ norm equal to $1$. Fig.\,\ref{fig:4_pulses} illustrates these pulses for $G=4$. 
\begin{figure}[t]
	\centering
	\includegraphics[scale=0.55]{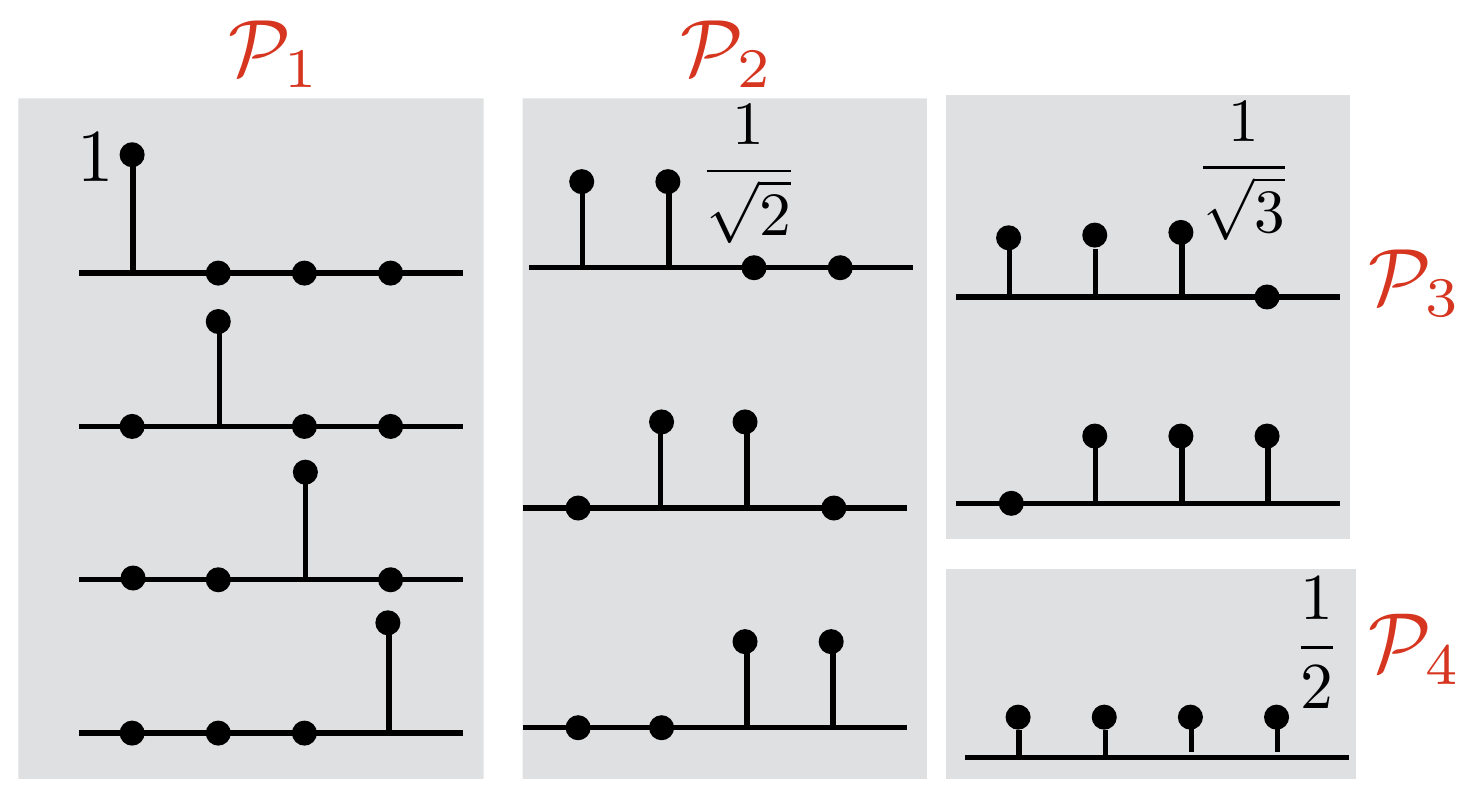}
	\caption{Collection of rectangular pulses $\clP_p$, for $p=1,2,3,4$.}
	\label{fig:4_pulses}
\end{figure}
We set a specific $p_0$ and define the discrete dictionary $\clD_{p_0}=\cup_{p=1}^{p_0} \clP_p$. 
Let $\gammam \in \bR^G$ be the vector of positive coefficients in the original NNLS problem in \eqref{eq:nnls_3}.
We claim that by enforcing the sparse representation of $\gammam$ over the dictionary $\clD_{p_0}$, for some $p_0\geq 2$, we can  promote the group-sparsity we aim to obtain. 
Let us fix a $p_0$ and let us define the size of the dictionary $\clD_{p_0}$ by  $D=|\clD|$ and the  $G\times D$ matrix consisting of the elements of $\clD$ by $\bfD$.
%
\def\alpham{\boldsymbol{\alpha}}
For any positive vector $\gammam\in \bR^G$, we define the sparse representation of $\gammam$ over $\clD_{p_0}$ by 
\begin{align}\label{atomic_norm}
\alpham(\gammam)=\argmin_{\alpham\in \bR_+^D} \|\alpham\|_1\text{ s.t. } \bfD \alpham=\gammam,
\end{align}
where we exploit the widely-adopted $\ell_1$-norm minimization to promote the sparsity of the representation $\alpham(\gammam)$. 
It is worthwhile to mention that in contrast with $\ell_1$-norm minimization for general sparse approximation, where the coefficients $\alpham$ in \eqref{atomic_norm} can be positive or negative, here we use only positive coefficients $\alpham$. The main reason is that   the original vector $\gammam$ and all the atoms of the dictionary have only positive coefficient and, as we will explain in the following, positivity of $\alpham$ imposes the group-sparsity structure we desire to have.

For $p_0=1$, the dictionary $\clD_{p_0=1}=\clP_1$ consists of only canonical vectors, thus, the sparsest representation of $\gammam$ over $\clD_{p_0=1}$ in \eqref{atomic_norm} will correspond to $\gammam$ itself, hence, no group-sparsity structure can be imposed. 
For $p_0\geq 2$, however, we can illustrate that sparse representation over $\clD_{p_0}$ favors group-sparse vectors $\gammam$. 

Let us show this by the following simple example.  Let us  set $G=4$ and consider the following two vectors $\gammam_1=(2,0,2,0)^\transp=2\bfe_1+ 2\bfe_3$ and $\gammam_2=(1,1,1,1)^\transp=\bfe_1+\dots+\bfe_4$ (see, e.g., Fig.\,\ref{fig:nnls_grouping}). For $p_0=1$, we can simply check that  both vectors $\gammam_1$ and $\gammam_2$ have the same $\ell_1$-norm $4$ over $\clD_{p_0=1}$.
Now let us consider $p_0=2$ and consider a larger dictionary $\clD_{p_0=2}=\clP_1 \cup \clP_2$ that  consists of all discrete rectangular pulses of width $1$ and $2$. A simple calculation following \eqref{atomic_norm} shows that the sparse representation of $\gammam_1$ in $\clD$ is given by 
\begin{align}
\gammam_1=(2,0,2,0)^\transp=2\times (1,0,0,0)^\transp + 2\times (0,0,1,0)^\transp.\nonumber
\end{align}
This is because $\gammam_1=(2,0,2,0)^\transp$ has a $0$ between its two non-zero elements, thus, it cannot contain any discrete pulse of width $2$ since each such pulse consists of two adjacent non-zero values. 
This implies that  $\ell_1$-norm of $\gammam_1$ is equal to $4$, thus, the same as its $\ell_1$-norm over the smaller dictionary $\clD_{p_0=1}$. 

For $\gammam_2=(1,1,1,1)$, in contrast, we can obtain the following sparse representation as the linear combination of two rectangular pules of width $2$:
\begin{align}
\gammam_2 =(1,1,1,1)^\transp&= \sqrt{2} \times \big (\frac{1}{\sqrt{2}}, \frac{1}{\sqrt{2}}, 0,0\big)^\transp \nonumber\\
&+ \sqrt{2} \times \big(0,0,\frac{1}{\sqrt{2}}, \frac{1}{\sqrt{2}}\big)^\transp.
\end{align}
It is seen that the resulting sparse representation has the $\ell_1$-norm  $2 \sqrt{2}$, which is lower than the $\ell_1$-nomr of $\gammam_1$ over the smaller dictionary $\clD_{p_0=1}$. This simple example clearly illustrates  that promoting the sparsity of  the vector of coefficients $\gammam$ over $\clD_{p_0}$ for some $p_0\geq 2$ favors those $\gammam$ having  the group-sparsity structure.

Based on this simple observation we can alleviate the issue caused by NNLS (see, e.g., Fig.\,\ref{fig:nnls_grouping}) as follows.
We set a number $p_0\geq 2$ and define the corresponding dictionary $\clD_{p_0}$ and corresponding dictionary matrix $\bfD$. We modify the previous NNLS algorithms where instead of solving \eqref{eq:nnls_3} as
\begin{align}\label{NNLS_mod}
{\gammam}^\star=\argmin _{\gammam\geq 0} \|\bfA \gammam - \bfb \|^2,
\end{align}
 we solve the following optimization after incorporating \eqref{atomic_norm} 
\begin{align}\label{LS_alg_group}
{\alpham}^\star=\argmin _{\alpham\geq 0} \| \bfA \bfD \alpham - \bfb\|^2 + \varsigma \|\alpham\|_1
\end{align}
where $\varsigma>0$ is a regularization parameter, which together with $\ell_1$-norm regularization on $\alpham$ promotes the sparsity of $\alpham$, and 
where afterwards we estimate ${\gammam}^\star$ as ${\gammam}^\star=\bfD {\alpham}^\star$. 
It is important to note that since all the canonical vectors in the set $\clP_1$ are included in the dictionary $\clD_{p_0}$, for all $p_0=1,2,\dots$, for $\varsigma=0$, the NNLS \eqref{LS_alg_group} yields the solution of the original NNLS \eqref{eq:nnls_3}. This can be seen simply by setting $\alpham^\star$ equal to $\lambdam^\star$ at the coordinates corresponding to $\clP_1$ and zero elsewhere. Therefore, by varying $\varsigma \in [0, \infty]$ we obtain a collection of ASFs with more and more group-sparsity. 

Although optimization problem \eqref{LS_alg_group} has the additional $\ell_1$-norm regularization, the following proposition shows that, due to the non-negativity of $\alpham$, it can be still posed as an NNLS, which we call generalized NNLS in the following.
\begin{proposition}\label{nnls_equiv}
	Let ${\alpham}^\star$ be the optimal solution of \eqref{LS_alg_group} and suppose that ${\alpham}^\star\not = 0$. Then, ${\alpham}^\star$ is the optimal solution of 
	\begin{align}\label{LS_alg_group2}
	{\alpham}^\star=\argmin _{\alpham\geq 0} \| \bfA \bfD \alpham - \bfb\|^2 + \varsigma' \|\alpham\|_1^2,
	\end{align}
	provided that $\varsigma'=\frac{\varsigma}{2\|{\alpham}^\star\|_1}$. \hfill $\square$
\end{proposition}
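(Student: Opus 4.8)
The plan is to prove the equivalence directly from the optimality of $\alpham^\star$ in \eqref{LS_alg_group}, exploiting the fact that on the non-negative orthant the $\ell_1$-norm reduces to the linear functional $\|\alpham\|_1=\sum_i \alpha_i$, so that the only nonlinearity distinguishing the two penalties is the outer square. I would write $f(\alpham)=\|\bfA\bfD\alpham-\bfb\|^2$ for the shared data-fidelity term and set $c:=\|\alpham^\star\|_1$, which is strictly positive precisely because $\alpham^\star\neq 0$ by hypothesis; this is exactly what makes $\varsigma'=\varsigma/(2c)$ well-defined. Notably, this route avoids any appeal to convexity of $f$ or to KKT machinery.

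First I would record the single consequence of $\alpham^\star$ solving \eqref{LS_alg_group} that the whole argument needs: for every feasible $\alpham\geq 0$,
\begin{align}
f(\alpham)+\varsigma\|\alpham\|_1 \;\geq\; f(\alpham^\star)+\varsigma c,
\end{align}
equivalently $f(\alpham)-f(\alpham^\star)\geq \varsigma\big(c-\|\alpham\|_1\big)$.

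Next I would bound the objective gap of \eqref{LS_alg_group2}. Writing $t:=\|\alpham\|_1\geq 0$ and substituting $\varsigma'=\varsigma/(2c)$,
\begin{align}
\big[f(\alpham)+\varsigma' t^2\big]-\big[f(\alpham^\star)+\varsigma' c^2\big]
&= \big[f(\alpham)-f(\alpham^\star)\big]+\varsigma'(t^2-c^2) \nonumber\\
&\geq \varsigma(c-t)+\frac{\varsigma}{2c}(t^2-c^2).
\end{align}
The crux of the argument is the algebraic identity that collapses the right-hand side into a perfect square: factoring $t^2-c^2=(t-c)(t+c)=-(c-t)(t+c)$ gives
\begin{align}
\varsigma(c-t)+\frac{\varsigma}{2c}(t^2-c^2)
&= \varsigma(c-t)\Big(1-\frac{t+c}{2c}\Big) = \frac{\varsigma}{2c}(c-t)^2 \;\geq\; 0,
\end{align}
using $\varsigma>0$ and $c>0$. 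Hence every feasible $\alpham$ attains objective value in \eqref{LS_alg_group2} at least that of $\alpham^\star$, so $\alpham^\star$ is optimal for \eqref{LS_alg_group2}.

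I expect the only delicate point to be the bookkeeping in this factorization — in particular verifying that the \emph{precise} constant $\varsigma'=\varsigma/(2\|\alpham^\star\|_1)$ is exactly what cancels the linear-in-$t$ contribution so that the remainder is a nonnegative multiple of $(c-t)^2$; any other choice of $\varsigma'$ would leave a residual linear term and the inequality would fail. The hypothesis $\alpham^\star\neq 0$ enters solely to guarantee $c>0$, so that $\varsigma'$ is finite and the division is legitimate.
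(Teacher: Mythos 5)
Your proof is correct, and it takes a genuinely different route from the paper. The paper argues through KKT conditions: it writes the stationarity and complementary-slackness conditions for both \eqref{LS_alg_group} and \eqref{LS_alg_group2}, observes that on the nonnegative orthant the gradient of $\varsigma\|\alpham\|_1$ is the constant vector $\varsigma\mathbf{1}$ while that of $\varsigma'\|\alpham\|_1^2$ is $2\varsigma'\|\alpham\|_1\mathbf{1}$, so the two KKT systems coincide at $\alpham^\star$ under the choice $\varsigma'=\varsigma/(2\|\alpham^\star\|_1)$ (with identical multipliers $\beta_i'=\beta_i$), and then invokes convexity of \eqref{LS_alg_group2} to upgrade the KKT certificate to global optimality. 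You instead prove global optimality directly: from the global optimality inequality for \eqref{LS_alg_group} and the perfect-square identity
\begin{align}
\varsigma(c-t)+\frac{\varsigma}{2c}\left(t^2-c^2\right)=\frac{\varsigma}{2c}(c-t)^2\geq 0,
\end{align}
you conclude that the objective gap of \eqref{LS_alg_group2} at any feasible $\alpham$ is nonnegative. Your argument buys two things the paper's does not: it never uses convexity of the data-fidelity term $f(\alpham)=\|\bfA\bfD\alpham-\bfb\|^2$ (so it would hold verbatim for an arbitrary, even nonconvex, $f$), and it yields the quantitative bound that the \eqref{LS_alg_group2}-suboptimality of any $\alpham$ is at least $\frac{\varsigma}{2c}\left(\|\alpham\|_1-c\right)^2$, forcing near-optimal solutions to have $\ell_1$-norm close to $\|\alpham^\star\|_1$. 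What the paper's KKT route buys in exchange is locality and symmetry: it makes transparent that the equivalence is really a statement about matching first-order conditions (the same mechanism one would use to go in the reverse direction, from a solution of \eqref{LS_alg_group2} back to \eqref{LS_alg_group}), whereas your inequality is tailored to the stated direction. Both proofs hinge on the same underlying fact that $\|\alpham\|_1$ is linear on $\alpham\geq 0$, and both use $\alpham^\star\neq 0$ only to ensure $c=\|\alpham^\star\|_1>0$ so that $\varsigma'$ is well-defined.
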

\begin{proof}
	For simplicity, let us define $\Phim=\bfA \bfD$. Then, we can write the KKT conditions \cite{boyd2004convex} for the optimizations as
	\begin{align}
	&2\,\Re[\phim_i^\herm(\Phim \alpham - \bfb)] + \varsigma + \beta_i=0, & \beta_i \alpha_i=0, \alpha_i \geq 0,\label{LS_alg_group_KKT}\\
	&2\,\Re[\phim_i^\herm(\Phim \alpham - \bfb)] + 2\varsigma' \|\alpham\|_1 + \beta'_i=0, &  \beta'_i \alpha_i=0, \alpha_i \geq 0,\label{LS_alg_group2_KKT}
	\end{align}
	where $\phim_i$ denotes the $i$-th column of $\Phim$ and where $\beta_i\geq 0$ and $\beta_i'\geq 0$ are the KKT coefficients corresponding to the positivity of $\alpham$ in \eqref{LS_alg_group} and \eqref{LS_alg_group2}, respectively. Since $\alpham^\star$ satisfies \eqref{LS_alg_group}, one can see that if $\alpham^\star\not = 0$ it also satisfies the KKT condition in \eqref{LS_alg_group2_KKT} by setting $\beta_i'=\beta_i$ and $\varsigma'=\frac{\varsigma}{2\|\alpham^\star\|_1}$. Since \eqref{LS_alg_group2} is a convex optimization problem, this implies that $\alpham^\star$ is also the optimal solution of  \eqref{LS_alg_group2} for the parameter $\varsigma'=\frac{\varsigma}{2\|\alpham^\star\|_1}$. 
\end{proof}
Using Proposition \eqref{nnls_equiv}, we can solve \eqref{LS_alg_group2} rather than the $\ell_1$-norm regularized function in \eqref{LS_alg_group}. In particular, by introducing 
\begin{align}
\wch{\bfA}=\left [ \begin{matrix} \sqrt{2 \varsigma'} \times {\bf 1}^\transp \\ \bfA \bfD \end{matrix}\right ], \wch{\bfb}=\left [ \begin{matrix}0  \\ \bfb \end{matrix}\right ],
\end{align}
where ${\bf 1}= (1, \dots, 1)^\transp$ denotes the all-1 vector, we can write \eqref{LS_alg_group2} more compactly as the following NNLS
\begin{align}\label{LS_alg_group3}
{\alpham}^\star=\argmin _{\alpham\geq 0} \| \wch{\bfA}  \alpham - \wch{\bfb}\|^2,
\end{align}
which similarly to \eqref{eq:nnls_3} can be efficiently solved with off-the-shelf optimization toolboxes.

\noindent{\bf Criterion for choosing $p_0$.}
It is  worthwhile here to mention that the parameter $p_0$ controls the width of the group, that is, larger $p_0$ favors larger groups inside the support. For example, as we explained, for $p_0=1$ the dictionary $\clD_{p_0=1}$ is unable to force any group-sparsity structure in the sense that the vectors $\gammam_1=(2,0,2,0)^\transp$ and $\gammam_2=(1,1,1,1)^\transp$, the former with a smaller group size and the latter with a larger group size, have the same $\ell_1$-norms. The $\ell_1$-norms,  however, change  for $\clD_{p_0=2}$. It is not difficult to check that similarly $p_0=2$ is ineffective to capture group-sparsity for groups of size larger than $2$. For example, two vectors $\gammam_3=(1,1,1,1,0)^\transp$ and $\gammam_4=(1,1,0,1,1)^\transp$ have the same $\ell_1$-norm of $2\sqrt{2}$ over $\clD_{p_0=2}$ although the former has a better group-sparsity than  the latter. To incorporate this and promote larger group sizes inside the support, we need to increase $p_0$. For example, setting $p_0=3$ and following similar steps,  we can show that the $\ell_1$-norm of $\gammam_3$  and $\gammam_4$ in the  dictionary $\clD_{p_0=3}$ would be $1+\sqrt{3}=2.7321$ and $2\sqrt{2}=2.8284$. Therefore, it is seen that $\gammam_3$, which has a better group-sparsity than $\gammam_4$ is favored in this new dictionary. In theory, by increasing $p_0$ one gets better group-sparsity at the cost of computational complexity since the size of dictionary  $\clD_{p_0}$, thus, the dimension of the vector $\alpham$, grows proportionally to $p_0$ as  
\begin{align}
\sum_{i=1}^{p_0} G-i+1=O(p_0 G),
\end{align}
and ultimately approaching $O(G^2)$ for a full dictionary $\clD_{p_0=G}$. In practice, as we investigated with simulation results, selecting $p_0$ larger than $\frac{G}{M}$ does not give a significant improvement. Intuitively speaking, this is due to the fact that, the lack of group-sparsity (see, e.g., Fig.\,\ref{fig:nnls_grouping}) emerges when $G$ is very larger than $M$ such that the width of the rectangular pulses is much below the angular resolution $\frac{2}{M}$ of the array. In those cases, the NNLS introduces a gap in  the support of the group by creating zero elements in between. By setting $p_0$ to $\frac{G}{M}$ or larger, we make sure that this gap in support will not happen over any AoA interval of width smaller than the resolution of the antenna $\frac{2}{M}$. This seems to be enough to alleviate the issue caused by NNLS.  Our simulation results fully confirm this intuition.

\section{A Machine Learning Approach Using Deep Neural Networks  for group-sparsity}
Although the NNLS together with the group-sparsity regularization imposed through the dictionary  $\clD_{p_0}$, for some $p_0\geq 2$, seem to be an efficient approach to obtain group-sparse solution, it is not scalable in terms of complexity when the number of antennas and the size of rectangular pulsed grid $G$ is quite large. Moreover, solving  NNLS \eqref{eq:nnls_3} and regularized NNLS \eqref{eq:nnls_3} through iterative techniques in quite challenging in many implementation of massive MIMO. In this section, we propose a Machine Learning (ML) approach using \textit{Deep Neural Networks} (DNNs) that tries to learn the notion of \textit{``group-sparsity''} from training data. More specifically, rather than applying any regularization to promote group-sparsity as we did in the previous approach, we let the DNN learn and ultimately extract group-sparse solutions from the observation of the i.i.d. channel vectors.
To be more specific, we consider the following steps.

\noindent{\bf a.\,Training Data.} We produce many ASFs with group-sparse structure as in \eqref{asf_group} with different number of groups $K\in \{1,2,3,4\}$. Moreover, we assume that the continuous function $\gamma_k$ corresponding to the group $k$, $k \in [K]$, is a pulse with a connected support of width  chosen  uniformly randomly in the interval $[0,0.3]$, namely, we assume that the scatterers in the propagation channels have an angular width of at most $0.3$. 
For each ASF $\gamma$ inside this set, we compute the corresponding Toeplitz covariance matrix and extract the $M$-dim vector $\sigmam$ corresponding to the first column of the covariance matrix. Recall that due to the Toeplitz structure, $\sigmam$ contains all information about the covariance matrix. Then, we produce a  noisy versions of $\sigmam$, denotes by $\wch{\sigmam}$, and add the $(\wch{\sigmam}, \gammam)$ as labeled sample to the training data, where $\gammam$ denotes a discrete quantization (sampling)  of $\gamma(\xi)$ over a uniform grid of size $G$.
By repeating this for all the ASFs inside the group, we produce our training data.
Note that we call $(\wch{\sigmam}, \gammam)$ a labeled sample since we hope that after suitable training DNN be able to recover $\gammam$ given the noisy samples $\wch{\sigmam}$.

\noindent{\bf b.\,Supervised Learning.} We consider a supervised training using the labeled data $\{(\wch{\sigmam}_s, \gammam_s): s \in [S]\}$ where $S$ denotes the number of training samples. We use these training samples to train a \textit{Deep Neural Networks} (DNN). For training, we use  the widely-adopted \textit{Stochastic Gradient Descend} (SGD) with $\ell(\gammam, \widehat{\gammam})=\|\gammam - \widehat{\gammam}\|_1$ as the loss function between the true $\gammam$ and the estimate $\widehat{\gammam}$ generated by the network.

\begin{figure}
	\centering
	\includegraphics[scale=0.5]{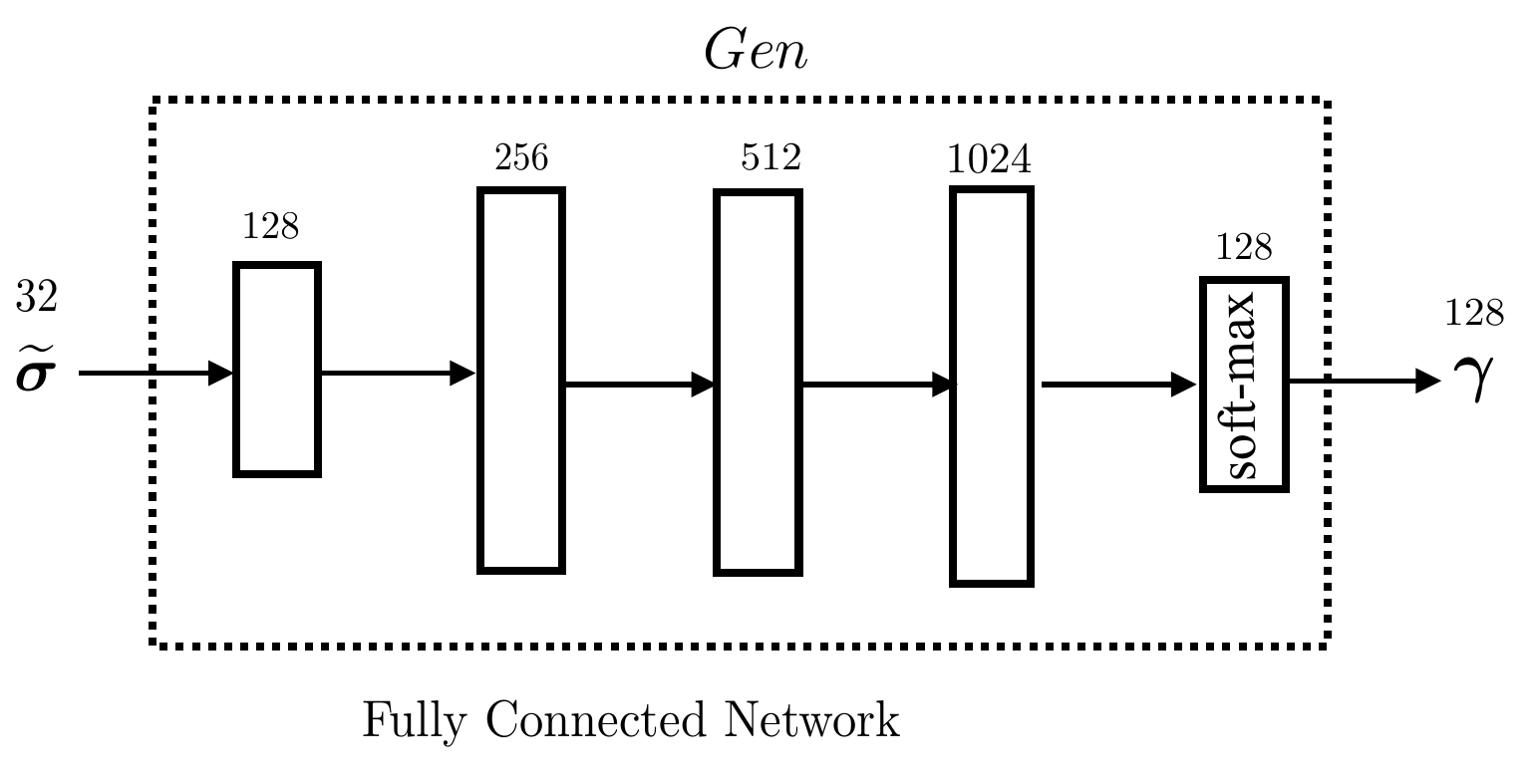}
	\caption{The structure of DNN adopted for ASF estimation. DNN consists of $5$ layers with $128$, $256$, $512$,  $1024$, and $128$ neurons, respectively. The last layer has a  \textit{soft-max} activation function and produces positive values for $\gammam\in \bR_+^128$. }
	\label{DNN}
\end{figure}


\noindent{\bf c.\,Structure of DNN.}
One of the important factors affecting the performance of the ASF estimation using DNNs is the structure of the DNN consisting of the number of layers, the number of neurons in each layer, and  the activation function of each layer. In this paper, we use a fully-connected network illustrated in Fig.\,\ref{DNN}, with $5$ layers consisting of $128$, $256$, $512$, $1024$, and $128$ neurons, respectively, where the number of neurons $128$ in the last layer corresponds to the grid size $G=128$ we are adopting for ASF quantization. The activation function of the $4$ initial layers is the RelU function $x \mapsto \max\{x,0\}$. For the last layer we use the \textit{soft-max} activation function, which for an input vectors $(x_1, \dots, x_G) ^\transp$ in the input produces the output as $(x_1, \dots, x_{G})^\transp \mapsto \frac{(e^{x_1}, \dots, e^{x_{G}})^\transp}{\sum_{j=1}^{G} e^{x_j}}$ where $G=128$, denotes the grid size we use for the quantization of ASF. Note that the summation of the elements produced by soft-max layer is always $1$, which produces a normalized $\gammam$, i.e., $\sum_{i=1}^G\gamma_i=1$.

\begin{figure*}[t]
	\centering
	\begin{subfigure}[t]{0.3\textwidth}
		\includegraphics[width=\textwidth]{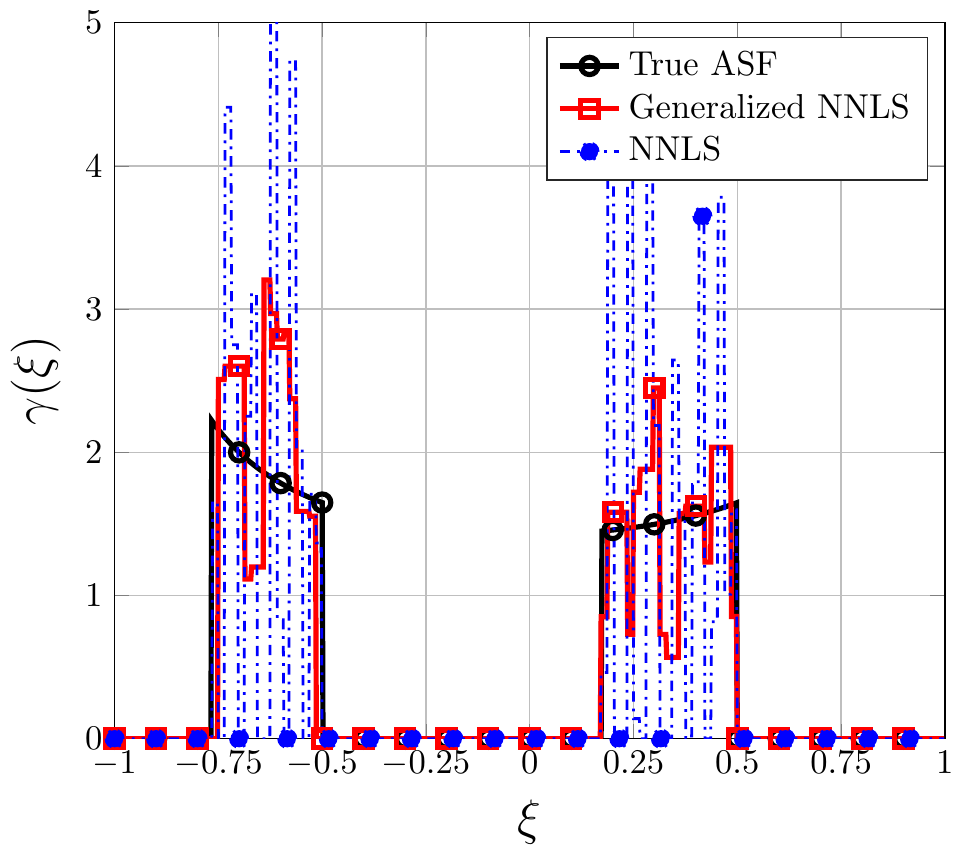}
		\caption{$\frac{T}{M}=2$}
	\end{subfigure}
~
	\begin{subfigure}[t]{0.3\textwidth}
		\includegraphics[width=\textwidth]{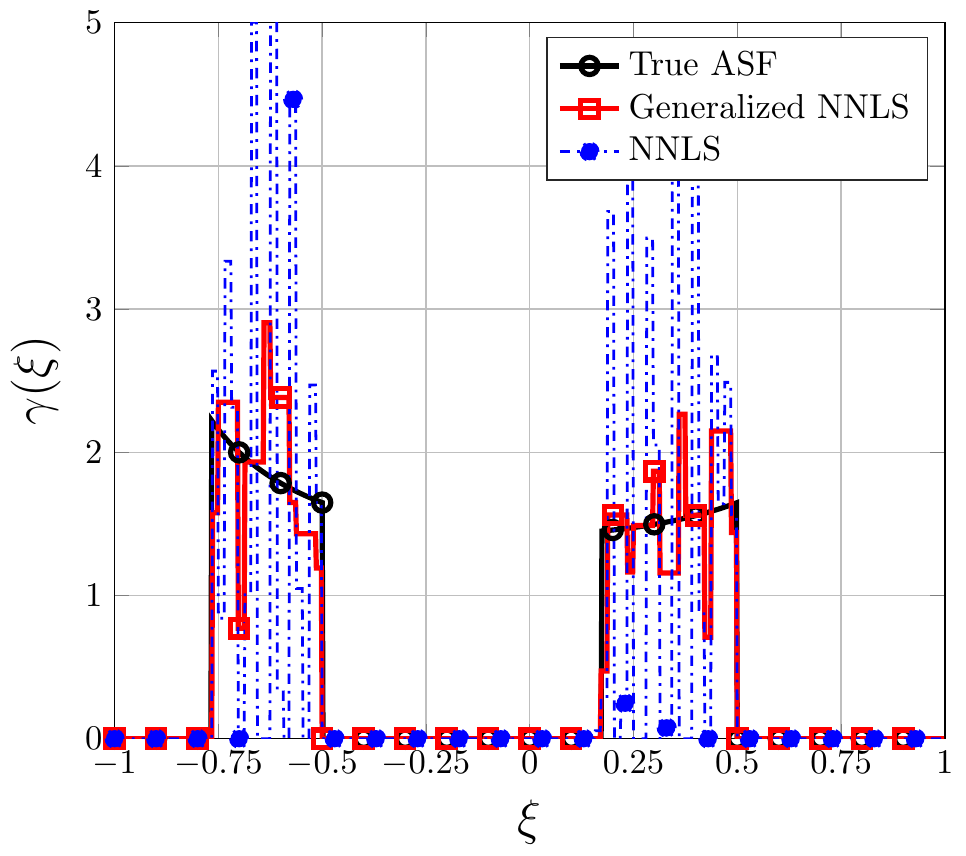}
		\caption{$\frac{T}{M}=4$}
	\end{subfigure}
~
\begin{subfigure}[t]{0.3\textwidth}
	\includegraphics[width=\textwidth]{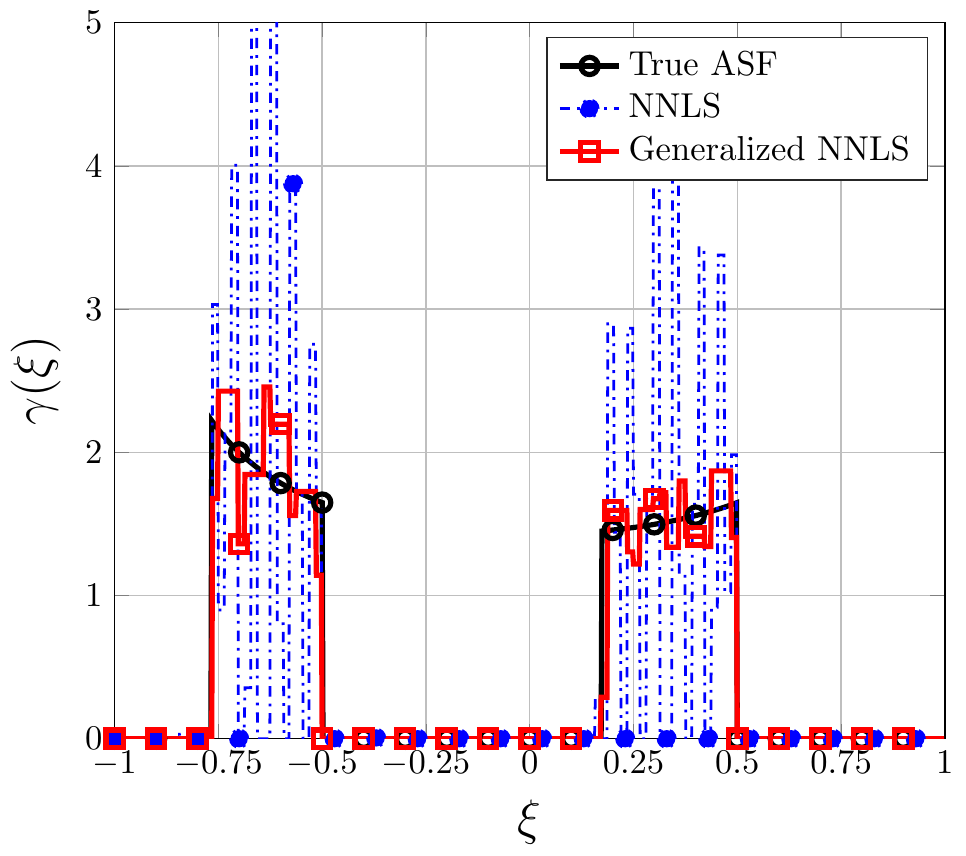}
	\caption{$\frac{T}{M}=8$}
\end{subfigure}

	\begin{subfigure}[t]{0.3\textwidth}
	\includegraphics[width=\textwidth]{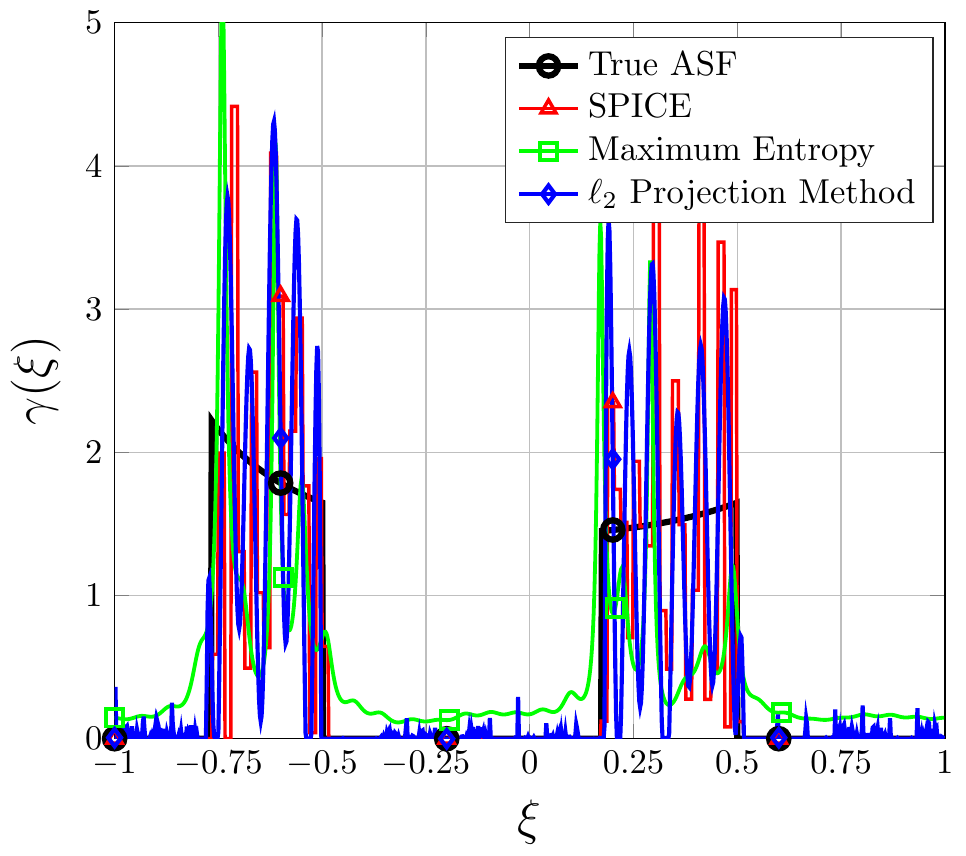}
	\caption{$\frac{T}{M}=2$}
\end{subfigure}
~
\begin{subfigure}[t]{0.3\textwidth}
	\includegraphics[width=\textwidth]{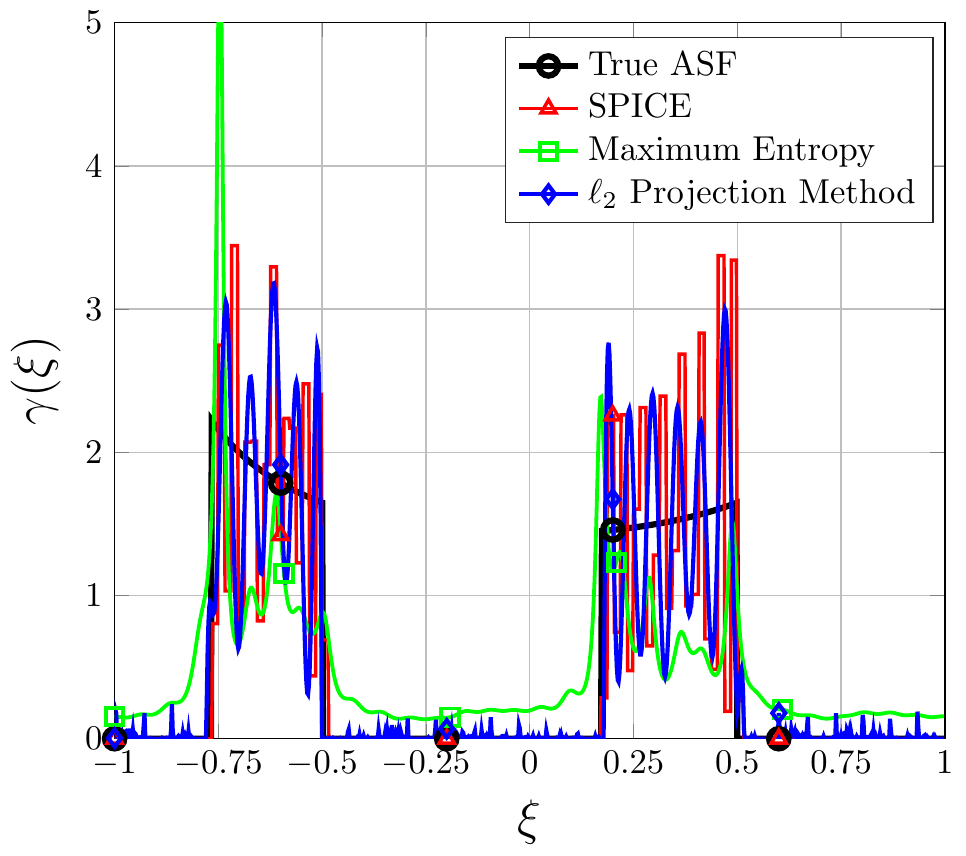}
	\caption{$\frac{T}{M}=4$}
\end{subfigure}
~
\begin{subfigure}[t]{0.3\textwidth}
	\includegraphics[width=\textwidth]{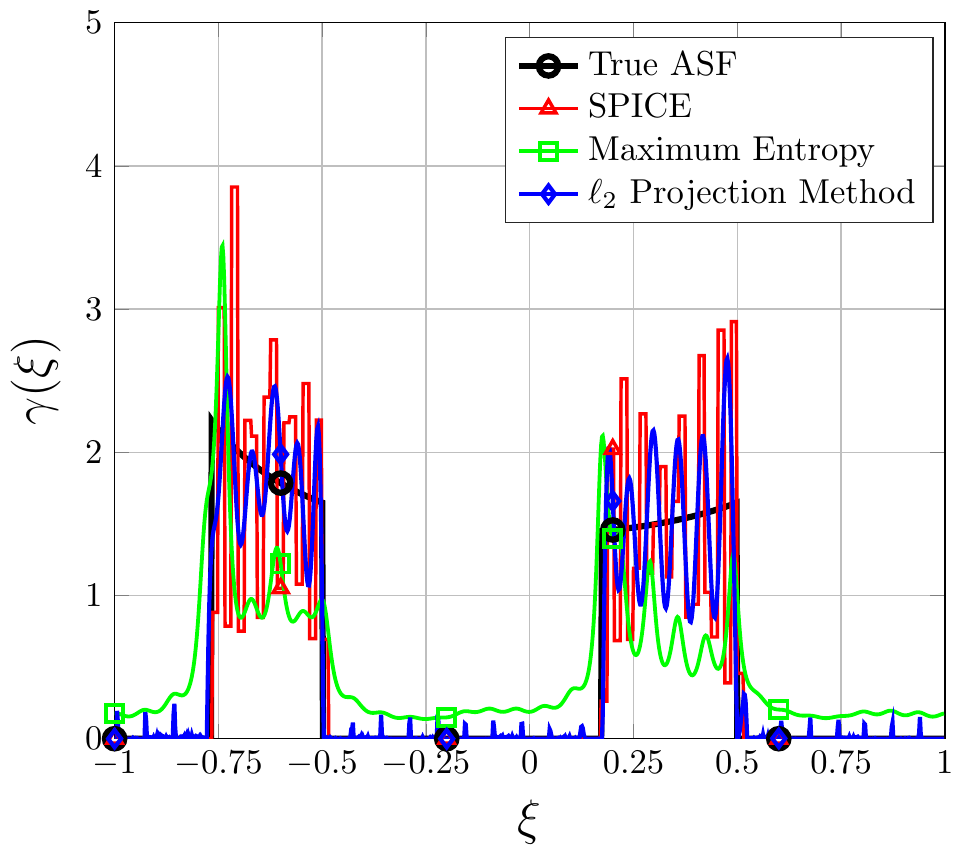}
	\caption{$\frac{T}{M}=8$}
\end{subfigure}

\begin{subfigure}[t]{0.3\textwidth}
	\includegraphics[width=\textwidth]{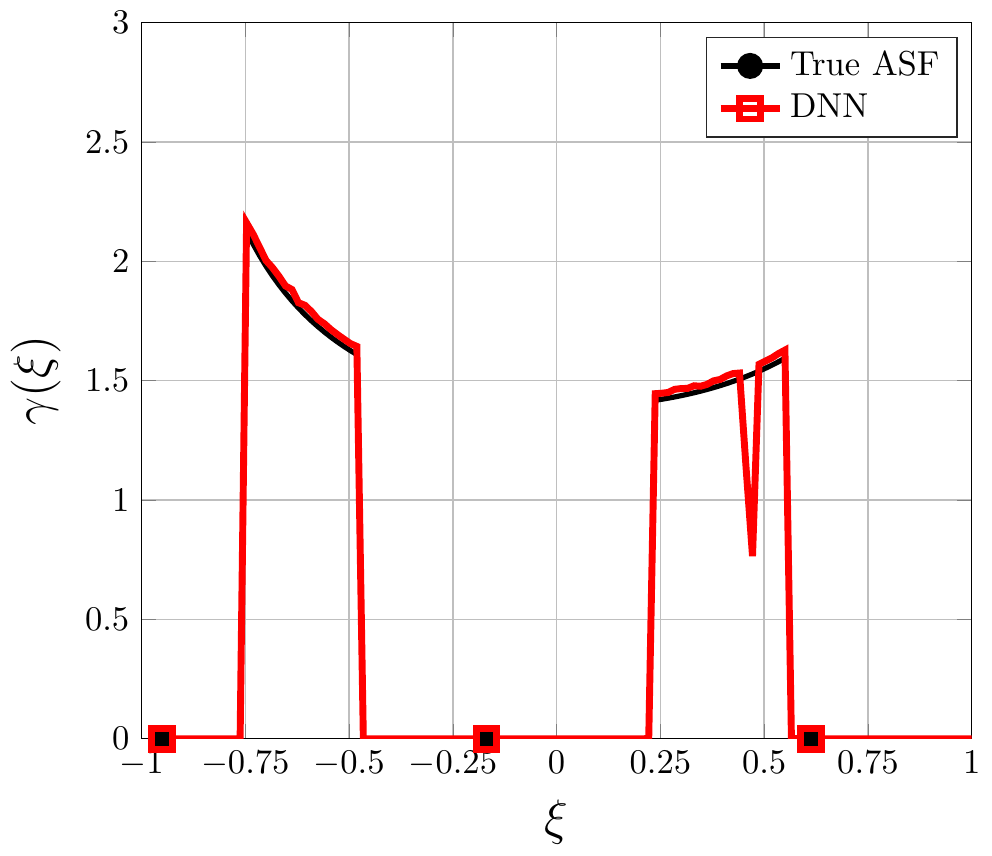}
	\caption{$\frac{T}{M}=2$}
\end{subfigure}
~
\begin{subfigure}[t]{0.3\textwidth}
	\includegraphics[width=\textwidth]{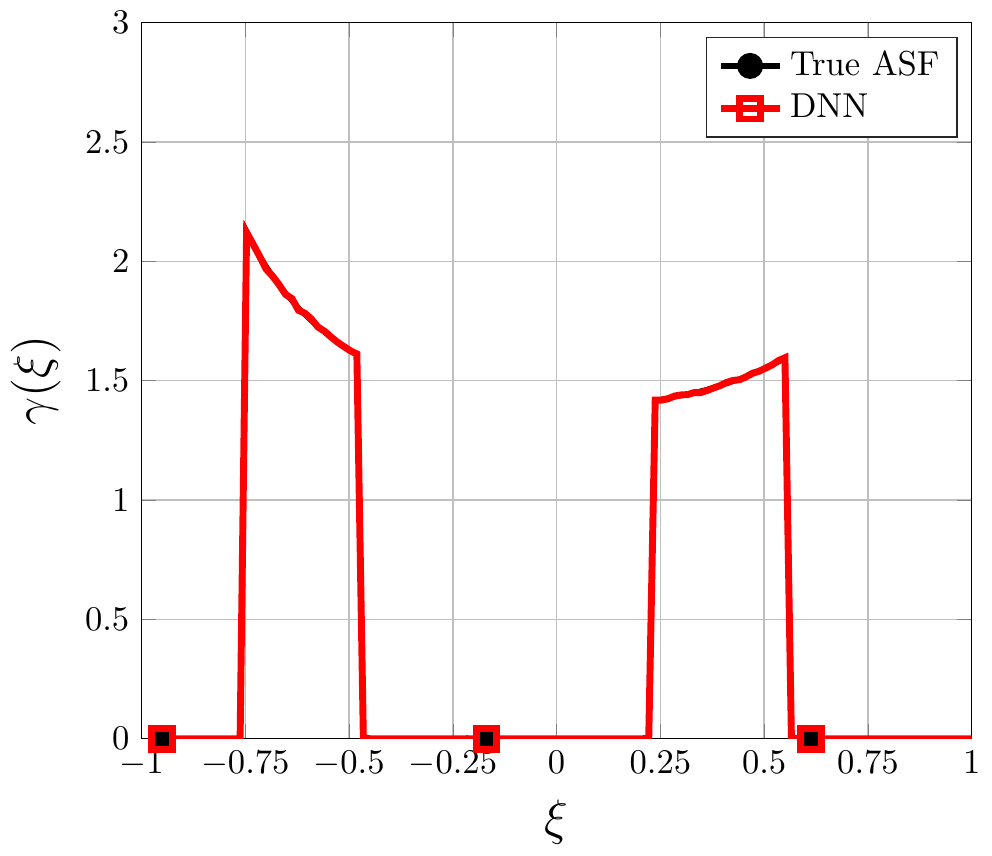}
	\caption{$\frac{T}{M}=4$}
\end{subfigure}
~
\begin{subfigure}[t]{0.3\textwidth}
	\includegraphics[width=\textwidth]{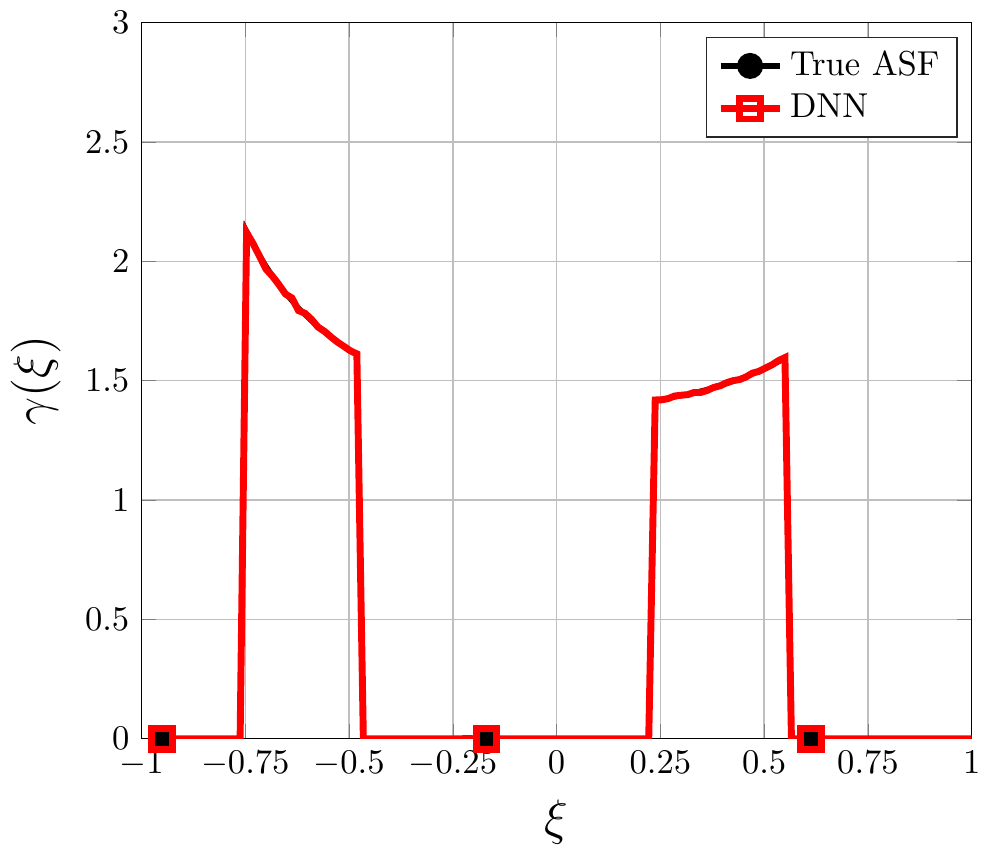}
	\caption{$\frac{T}{M}=8$}
\end{subfigure}
	\caption{ASF estimation comparison for various methods: the upper row illustrates the estimation for NNLS and generalized NNLS methods, and the lower row represents the estimation for SPICE, maximum entropy, and $\ell_2$-projection method. The number of noisy channel samples varies from left to right as: $T=2M$, $T=4M$, and $T=8M$.}
	\label{fig:compare_liter}
\end{figure*}

\section{Simulation Results}
In this section, we perform numerical simulations to assess the performance of our proposed algorithm. We compare our method with the following algorithms: (a) SPICE \cite{stoica2011spice}, (b) the well-known Burg's Maximum Entropy (BME) method for spectral estimation \cite{burg1967maximum}, and the recently-proposed $\ell_2$-norm projection method \cite{miretti2018fdd} given by \eqref{HHI_method}. For all the simulations, we consider a \textit{Signal-to-Noise Ratio} of 20\,dB for noisy channel vectors.

Fig.\,\ref{fig:compare_liter} illustrates the simulation results for different sampling ratios (number of samples per signal dim or number of antennas) $\frac{T}{M}$.
It is seen that the SPICE and also our proposed NNLS methods produce very sparse solutions but are not able to capture the group-sparsity structure.

 Maximum Entropy method and $\ell_2$-norm projection, in contrast, produce almost group-sparse estimates although both are quite fluctuating over the support of the true ASF, and have quite large out-of-band components off the support. 
This is typically the main problem of  classical power spectral estimation methods and also $\ell_2$-norm projection  as they produce large ripples when the ASF has sharp transitions, which is  the case with group-sparse ASFs we consider in this paper. This ripple-effect  can be evidently seen from the simulation results.

Our proposed generalized NNLS, however, performs quite well: it estimates the support almost perfectly, and reproduces the  amplitude of the ASF over the support  quite precisely (although not perfectly)   for large $\frac{T}{M}$. Also, compared with NNLS and SPICE, it is able to capture the group-sparsity very well.

 It is seen that DNN has an excellent performance much superior to that of other methods and also our proposed generalized NNLS. Note that we had trained DNN with $K\in \{1,2,3,4\}$ clusters and the ASF we use for testing in this simulation has $K=2$ clusters, thus, it belongs to the category of ASFs over which the DNN has been trained. And, it is seen that DNN is able to recognize the number of clusters of the ASF  perfectly for $\frac{T}{M}=4,8$. 
 Although trivial, but we would like to emphasize that the ASF chosen for testing has been selected completely randomly, and in particular it does not belong to the training set. Over all, one can see that DNN is able to estimate the number of clusters, their support, and the amplitude of ASF over the support almost perfectly.


	\balance
	
	{\small
		\bibliographystyle{IEEEtran}
		\bibliography{references2}
	}
	
\end{document}